\documentclass[a4paper]{svproc}
\usepackage{url}

\usepackage{amsmath,amssymb}
\newtheorem{assumption}{Assumption}
\usepackage{tikz}
\usetikzlibrary{arrows.meta,positioning,calc,fit,shapes.geometric}
\usepackage{graphicx}
\usepackage{subcaption}
\usepackage[english]{babel}
\begin{document}
\mainmatter              
\title{High-Probability ISS Tubes for Continuous-Time State Estimation}
\titlerunning{High-Probability ISS Tubes}
\author{Jerzy Baranowski}
\authorrunning{Jerzy Baranowski} 
%
\tocauthor{Jerzy Baranowski}
\institute{AGH University of Kraków,\\
\email{jb@agh.edu.pl}}

\maketitle              

\begin{abstract}
This paper studies a probabilistic interpretation of input-to-state
stability (ISS) bounds for estimation-error dynamics in continuous-time systems.
We show that, if the aggregated disturbance satisfies a probabilistic envelope in
an essential-supremum sense, then deterministic ISS bounds immediately induce
high-probability error tubes. To make this interpretation constructive, we also
provide explicit sufficient conditions based on quadratic Lyapunov inequalities
and specialize them to positive and cooperative systems. The approach is illustrated
on a positive compartment model with aggregated measurements, where ISS tubes
are compared with nominal uncertainty bands produced by a Kalman--Bucy filter
and by Gaussian and robust moving-horizon estimators. The examples show that
ISS tubes provide a conservative but computationally light uncertainty baseline,
while robust MHE is less sensitive to outlier contamination than Gaussian-based
estimators.
\keywords{state estimation, robustness, Bayesian interpretation, positive systems, input-to-state stability}
\end{abstract}

\section{Introduction}

Robust state estimation for dynamical systems is central in control and monitoring
applications. A classical robustness approach treats disturbances and noise as bounded
signals and derives worst-case guarantees for the estimation error, while probabilistic
methods quantify uncertainty by specifying stochastic noise models and producing
credible sets or covariance-based bands.

A standard robustness framework for such questions is input-to-state stability
(ISS). Foundational ISS results characterize how the state, or in the present
context the estimation error, depends on the initial condition and on exogenous
disturbance inputs \cite{sontag1995input,sontag1996characterizations,sontag2008iss}. Related extensions cover broader input-output
stability formulations and impulsive settings \cite{hespanha2008lyapunov,krichman2001iooss}. In estimation problems,
this viewpoint naturally leads to deterministic error tubes that remain valid
under bounded disturbances and model mismatch.

This robustness perspective is especially relevant for constrained and positive
systems. Positive-systems theory provides a natural setting for state variables
that represent nonnegative masses, populations, or concentrations \cite{kaczorek2002positive,farina2000positive,kaczorek2008fractional},
while the control-theoretic background used here follows standard ISS-based
reasoning \cite{mitkowski2019zarys}. In such settings, a deterministic tube description may be
more credible than a fully specified stochastic model, especially when the
available knowledge is better expressed through envelopes or bounds.

In parallel, probabilistic estimation methods describe uncertainty through
distributional assumptions. Classical Kalman filtering provides covariance-based
uncertainty quantification for linear-Gaussian models \cite{kalman1960new}, while moving-
horizon estimation offers an optimization-based alternative that has been
extensively studied for constrained and nonlinear systems \cite{rao2003constrained,haseltine2005critical}. More
recent work has developed robust, variational-Bayes, and data-driven variants
of MHE and related Bayesian approximations \cite{liu2013robust,dong2022variational,sun2023data,fang2018nonlinear,battistelli2018distributed}.

The present paper does not propose a new estimator. Instead, it provides an
interpretation layer that connects deterministic ISS robustness analysis with
probabilistic uncertainty statements. More precisely, we show that if the
disturbance satisfies a probabilistic envelope in an essential-supremum sense,
then the deterministic ISS tube can be read as a high-probability credibility
region. The main result is formulated for continuous-time systems, linked to
explicit sufficient conditions based on quadratic Lyapunov inequalities, and
illustrated on a positive compartment model with aggregated measurements under
both matched bounded noise and outlier contamination.

\section{Problem formulation, estimation error dynamics and ISS tubes}

Consider a continuous-time system
\begin{equation}
\dot x(t) = f(x(t),u(t),t) + w(t), \qquad
y(t) = h(x(t),t) + v(t),
\end{equation}
where $x(t)\in\mathbb{R}^n$ is the state, $y(t)\in\mathbb{R}^m$ is the measured output,
$w(t)$ denotes process disturbances, and $v(t)$ denotes measurement noise.

Let $\hat x(t)$ be a state estimate generated by an observer or estimation algorithm.
Let $e(t) := x(t) - \hat x(t).\in\mathbb{R}^n$ denote the estimation error and assume its dynamics can be
written in the abstract form
\begin{equation}
\dot e(t) = f_e(e(t),d(t),t),
\label{eq:abstract-error}
\end{equation}
where $d(t)$ is an aggregated disturbance/noise input. For a measurable signal
$s:[0,t]\to\mathbb{R}^q$ we define the essential supremum norm
\[
\|s\|_{\infty,[0,t]} := \operatorname*{ess\,sup}_{\tau\in[0,t]}\|s(\tau)\|.
\]

\begin{assumption}[Deterministic ISS bound]
\label{ass:iss}
There exist functions $\beta\in\mathcal{KL}$ and $\gamma\in\mathcal{K}$ such that for
all $t\ge 0$,
\begin{equation}
\|e(t)\| \le \beta(\|e(0)\|,t) + \gamma(\|d\|_{\infty,[0,t]}).
\label{eq:iss-bound}
\end{equation}
\end{assumption}


\begin{assumption}[Probabilistic disturbance envelope]
\label{ass:envelope}
Fix $t\ge 0$. There exist $\bar d(t)\ge 0$ and $\delta\in(0,1)$ such that
\begin{equation}
\mathbb{P}\big(\|d\|_{\infty,[0,t]} \le \bar d(t)\big)\ge 1-\delta .
\label{eq:prob-envelope}
\end{equation}
\end{assumption}

\begin{theorem}[High-probability ISS tube]
\label{thm:hp-tube}
Let $t\ge 0$ be fixed. If Assumptions~\ref{ass:iss} and \ref{ass:envelope} hold, then
\begin{equation}
\mathbb{P}\Big(
\|e(t)\| \le \beta(\|e(0)\|,t) + \gamma(\bar d(t))
\Big)\ge 1-\delta .
\label{eq:hp-tube}
\end{equation}
\end{theorem}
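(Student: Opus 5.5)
The plan is a direct event-inclusion argument combined with monotonicity of $\mathcal{K}$ functions. Work on the underlying probability space $(\Omega,\mathcal{F},\mathbb{P})$ on which the disturbance $d$ (and hence $e$) is defined. Assumption~\ref{ass:iss} is read pathwise: for every realization $\omega$ (or at least almost every one), the trajectory $e(\cdot,\omega)$ satisfies \eqref{eq:iss-bound} with the realized signal $d(\cdot,\omega)$. The initial error $e(0)$ may itself be random; it simply appears on both sides of the comparison.

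Define two events:
\[
A:=\{\|d\|_{\infty,[0,t]}\le \bar d(t)\}
\]
and
\[
B:=\{\|e(t)\|\le \beta(\|e(0)\|,t)+\gamma(\bar d(t))\}.
\]
The key step is the inclusion $A\subseteq B$ (up to a null set). Take $\omega\in A$. Since $\gamma\in\mathcal{K}$ is nondecreasing, $\gamma(\|d\|_{\infty,[0,t]})\le\gamma(\bar d(t))$. Substituting this into \eqref{eq:iss-bound} gives $\|e(t)\|\le \beta(\|e(0)\|,t)+\gamma(\bar d(t))$, so $\omega\in B$.

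Monotonicity of $\mathbb{P}$ and Assumption~\ref{ass:envelope} then yield $\mathbb{P}(B)\ge\mathbb{P}(A)\ge 1-\delta$, which is \eqref{eq:hp-tube}.

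The only real obstacle is measurability, which is needed for $\mathbb{P}(B)$ to make sense. I would handle it as follows:
\begin{itemize}
\item $\beta$ and $\gamma$ are continuous, so $B$ is measurable whenever $e(t)$ and $e(0)$ are random variables. This is implicit in the setting, since $e$ is generated by the dynamics \eqref{eq:abstract-error} driven by the measurable input $d$.
\item $A$ is measurable by hypothesis, because Assumption~\ref{ass:envelope} assigns it a probability.
\item If one prefers to avoid these assumptions, the same inclusion gives the claim for the inner probability, or equivalently for the completed probability measure.
\end{itemize}
If the ISS bound holds only almost surely, intersecting $A$ with the full-measure set on which it holds leaves the probability unchanged.
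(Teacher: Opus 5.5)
Your proof is correct and matches the paper's argument: you define the envelope event, use the ISS bound together with monotonicity of $\gamma$ to show it lies inside the tube event, and conclude by monotonicity of $\mathbb{P}$. Your explicit appeal to $\gamma\in\mathcal{K}$ being nondecreasing and your measurability remarks make precise two points that the paper's proof uses only implicitly.
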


\begin{proof}
Define the event
\[
E_t := \Big\{\omega:\ \|d\|_{\infty,[0,t]}(\omega)\le \bar d(t)\Big\}.
\]
By Assumption~\ref{ass:envelope}, $\mathbb{P}(E_t)\ge 1-\delta$.

Consider any outcome $\omega\in E_t$. Then $\|d\|_{\infty,[0,t]}(\omega)\le \bar d(t)$.
Applying the deterministic ISS inequality \eqref{eq:iss-bound} (Assumption~\ref{ass:iss})
to this outcome yields
\[
\|e(t)\|(\omega)\le \beta(\|e(0)\|,t) + \gamma(\|d\|_{\infty,[0,t]}(\omega))
\le \beta(\|e(0)\|,t) + \gamma(\bar d(t)).
\]
Hence $E_t$ is a subset of the event
\[
T_t := \Big\{\omega:\ \|e(t)\|(\omega)\le \beta(\|e(0)\|,t) + \gamma(\bar d(t))\Big\}.
\]
Therefore $\mathbb{P}(T_t)\ge \mathbb{P}(E_t)\ge 1-\delta$, which is exactly
\eqref{eq:hp-tube}.
\end{proof}


Theorem~\ref{thm:hp-tube} reduces probabilistic tube guarantees to verifying a
deterministic ISS estimate. We now provide an explicit sufficient condition, stated
in a form that directly matches the quadratic Lyapunov constructions used later in
Section~4.

\paragraph{Aggregated disturbance.}
In what follows, we aggregate process disturbances and measurement noise into a single
input $d(t)$ defined as
\begin{equation}
d(t) := \begin{bmatrix} w(t) \\ v(t) \end{bmatrix},
\end{equation}
equipped with the weighted norm
\begin{equation}
\|d(t)\|^2 := \|w(t)\|^2 + \rho^2 \|v(t)\|^2,
\label{eq:d-weighted-norm}
\end{equation}
where $\rho>0$ is a fixed weighting parameter.

\begin{proposition}[Quadratic Lyapunov inequality implies an ISS bound]
\label{prop:lyap-to-iss}
Assume there exists a symmetric matrix $P\succ 0$ and constants $a>0$, $b>0$ such that
the function $V(e)=e^\top P e$ satisfies
\begin{equation}
\dot V(e(t)) \le -a V(e(t)) + b\|d(t)\|^2
\qquad\text{for almost all } t\ge 0,
\label{eq:lyap-diss}
\end{equation}
where $\|\cdot\|$ is defined in \eqref{eq:d-weighted-norm}. Then Assumption~\ref{ass:iss} holds with the explicit functions
\begin{equation}
\beta(r,t) = \sqrt{\frac{\lambda_{\max}(P)}{\lambda_{\min}(P)}}\,e^{-at/2}\,r,
\qquad
\gamma(s) = \sqrt{\frac{b}{a\,\lambda_{\min}(P)}}\,s.
\label{eq:beta-gamma}
\end{equation}
\end{proposition}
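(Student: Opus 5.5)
The argument is the standard comparison-lemma route: integrate the dissipation inequality \eqref{eq:lyap-diss} along the trajectory, then translate the resulting bound on $V$ into a bound on $\|e(t)\|$ using the eigenvalue sandwich
\[
\lambda_{\min}(P)\|e\|^2\le V(e)\le \lambda_{\max}(P)\|e\|^2 .
\]
The only requirement is that $t\mapsto V(e(t))$ be absolutely continuous, so that the a.e. inequality can be integrated. This holds because $e$ is a Carathéodory solution of \eqref{eq:abstract-error}, hence absolutely continuous, and $V$ is smooth.

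\textbf{Step 1 (Grönwall/integrating factor).} Set $\phi(\tau):=e^{a\tau}V(e(\tau))$. This function is absolutely continuous, and for almost all $\tau$,
\[
\dot\phi(\tau)=e^{a\tau}\bigl(\dot V+aV\bigr)\le b\,e^{a\tau}\|d(\tau)\|^2 .
\]
Integrating over $[0,t]$ and bounding $\|d(\tau)\|\le\|d\|_{\infty,[0,t]}$ for almost every $\tau\in[0,t]$ (which is all the integral sees) gives
\[
V(e(t))\le e^{-at}V(e(0))+b\,\|d\|_{\infty,[0,t]}^2\int_0^t e^{-a(t-\tau)}\,d\tau
\le e^{-at}V(e(0))+\frac{b}{a}\|d\|_{\infty,[0,t]}^2 .
\]
This is the step that needs care. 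The essential supremum only controls $\|d\|$ almost everywhere, and the absolute continuity of $\phi$ is what legitimizes the fundamental theorem of calculus with an a.e. derivative bound. If $\|d\|_{\infty,[0,t]}=\infty$, the claim is trivial.

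\textbf{Step 2 (back to the norm).} Apply the sandwich on both sides:
\[
\lambda_{\min}(P)\|e(t)\|^2\le \lambda_{\max}(P)e^{-at}\|e(0)\|^2+\frac{b}{a}\|d\|_{\infty,[0,t]}^2 .
\]
Divide by $\lambda_{\min}(P)$, take square roots, and use $\sqrt{x+y}\le\sqrt{x}+\sqrt{y}$ for $x,y\ge0$. This yields exactly
\[
\|e(t)\|\le\beta(\|e(0)\|,t)+\gamma(\|d\|_{\infty,[0,t]}),
\]
with $\beta,\gamma$ as in \eqref{eq:beta-gamma}.

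\textbf{Step 3 (class membership).} Finally, $\gamma(s)$ is linear with a positive coefficient, so $\gamma\in\mathcal K$. The function $\beta(r,t)$ is linear increasing in $r$ with $\beta(0,t)=0$, and is decreasing to $0$ in $t$ because $a>0$, so $\beta\in\mathcal{KL}$. Hence Assumption~\ref{ass:iss} holds, with the norm on $d$ understood as the weighted norm \eqref{eq:d-weighted-norm}.
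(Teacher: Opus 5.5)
Your proof is correct and follows the same route as the paper's sketch: you solve the scalar differential inequality (via the integrating factor $e^{a\tau}V(e(\tau))$), bound the forcing term by the essential supremum, and then apply the eigenvalue sandwich and $\sqrt{x+y}\le\sqrt{x}+\sqrt{y}$. You also make explicit what the paper leaves implicit---absolute continuity of $t\mapsto V(e(t))$, the almost-everywhere nature of the bound on $\|d\|$, discarding the factor $1-e^{-at}\le 1$, and the check that $\beta\in\mathcal{KL}$ and $\gamma\in\mathcal{K}$.
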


\begin{proof}[sketch]
From \eqref{eq:lyap-diss} we have $\dot V \le -aV + b\|d(t)\|^2$. Solving this scalar
differential inequality yields
$V(t)\le e^{-at}V(0) + \frac{b}{a}(1-e^{-at})\|d\|_{\infty,[0,t]}^2$.
Using $\lambda_{\min}(P)\|e\|^2 \le V(e)\le \lambda_{\max}(P)\|e\|^2$ and
$\sqrt{\alpha+\beta}\le \sqrt{\alpha}+\sqrt{\beta}$ gives \eqref{eq:iss-bound} with
$\beta,\gamma$ as in \eqref{eq:beta-gamma}.
\end{proof}

\section{Sufficient conditions via quadratic Lyapunov functions}

This section provides explicit sufficient conditions under which the deterministic
ISS bound \eqref{eq:iss-bound} holds. The focus is on continuous-time systems and
quadratic Lyapunov functions with constant matrices, enabling transparent analysis
and direct connection to error tubes.

\subsection{General nonlinear systems}

Consider the plant--observer pair
\begin{align}
\dot x(t) &= f(x(t),u(t),t) + w(t), \label{eq:plant}\\
y(t) &= h(x(t),t) + v(t), \label{eq:meas}\\
\dot{\hat x}(t) &= f(\hat x(t),u(t),t) + K\big(y(t)-h(\hat x(t),t)\big), \label{eq:observer}
\end{align}
where $x(t),\hat x(t)\in\mathbb{R}^n$, $y(t)\in\mathbb{R}^m$, and $w(t),v(t)$ are
measurable disturbance /noise signals. Define the estimation error $e(t)=x(t)-\hat x(t)$.
Subtracting \eqref{eq:observer} from \eqref{eq:plant} and using \eqref{eq:meas} yields
\begin{equation}
\dot e(t) = f(x(t),u(t),t)-f(\hat x(t),u(t),t)
- K\big(h(x(t),t)-h(\hat x(t),t)\big)
+ w(t)-Kv(t).
\label{eq:error-dynamics}
\end{equation}

\begin{corollary}[ISS via incremental dissipativity in a constant metric]
\label{cor:incremental-full}
Assume that there exist a symmetric matrix $P\succ 0$, constants $\lambda>0$ and
$L_h\ge 0$ such that for all $x,\hat x\in\mathbb{R}^n$ and all $t$:
\begin{enumerate}
\item[(A1)] (Incremental dissipativity)
\begin{equation}
(x-\hat x)^\top P\big(f(x,u,t)-f(\hat x,u,t)\big)
\le -\lambda \|x-\hat x\|^2 .
\label{eq:A1}
\end{equation}
\item[(A2)] (Lipschitz output map)
\begin{equation}
\|h(x,t)-h(\hat x,t)\| \le L_h \|x-\hat x\| .
\label{eq:A2}
\end{equation}
\item[(A3)] (Gain compatibility)
\begin{equation}
\|PK\|\,L_h \le \lambda/2 ,
\label{eq:A3}
\end{equation}
where $\|\cdot\|$ denotes the induced operator norm consistent with the vector norm
$\|\cdot\|$.
\end{enumerate}
Then the estimation error system \eqref{eq:error-dynamics} is input-to-state stable with
respect to the aggregated disturbance $d(t)=(w(t),v(t))$ in the following explicit sense:
there exist $\beta\in\mathcal{KL}$ and $\gamma\in\mathcal{K}$ such that for all $t\ge 0$,
\[
\|e(t)\| \le \beta(\|e(0)\|,t) + \gamma\big(\|w\|_{\infty,[0,t]}+\|v\|_{\infty,[0,t]}\big).
\]
\end{corollary}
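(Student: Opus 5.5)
The plan is to verify the dissipation inequality \eqref{eq:lyap-diss} of Proposition~\ref{prop:lyap-to-iss} for $V(e)=e^\top P e$ along \eqref{eq:error-dynamics}, and then convert the resulting weighted-norm bound into the stated form with $\|w\|_{\infty,[0,t]}+\|v\|_{\infty,[0,t]}$.

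\emph{Step 1: differentiate $V$.} Since $e$ is absolutely continuous, for almost all $t$
\[
\dot V = 2e^\top P\big(f(x,u,t)-f(\hat x,u,t)\big) - 2e^\top PK\big(h(x,t)-h(\hat x,t)\big) + 2e^\top P w - 2e^\top PK v .
\]
Use (A1) on the first term to get $\le -2\lambda\|e\|^2$. For the second term, apply Cauchy--Schwarz, (A2) and (A3) to get $\le 2\|PK\|L_h\|e\|^2\le \lambda\|e\|^2$. The net homogeneous part is therefore $\le -\lambda\|e\|^2$.

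\emph{Step 2: Young's inequality on the disturbance terms.} Bound
\[
2e^\top Pw\le \tfrac{\lambda}{4}\|e\|^2+\tfrac{4\|P\|^2}{\lambda}\|w\|^2,
\qquad
2e^\top PKv\le \tfrac{\lambda}{4}\|e\|^2+\tfrac{4\|PK\|^2}{\lambda}\|v\|^2 .
\]
This gives $\dot V\le -\tfrac{\lambda}{2}\|e\|^2 + c_w\|w\|^2+c_v\|v\|^2$. Then use $\|e\|^2\ge V/\lambda_{\max}(P)$ to obtain
\[
\dot V\le -aV + b\big(\|w\|^2+\rho^2\|v\|^2\big),
\]
with $a=\lambda/(2\lambda_{\max}(P))$. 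Choose any $\rho>0$, for instance $\rho=1$, and set $b=\max(c_w,c_v/\rho^2)>0$; if $K=0$ or $c_w=0$, still take $b>0$, which is harmless. This is exactly \eqref{eq:lyap-diss}.

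\emph{Step 3: apply Proposition~\ref{prop:lyap-to-iss}.} It gives $\|e(t)\|\le\beta(\|e(0)\|,t)+\gamma(\|d\|_{\infty,[0,t]})$ with $\beta,\gamma$ as in \eqref{eq:beta-gamma}. Since $\|d(\tau)\|\le \|w(\tau)\|+\rho\|v(\tau)\|\le \max(1,\rho)\big(\|w(\tau)\|+\|v(\tau)\|\big)$, taking ess sup yields $\|d\|_{\infty,[0,t]}\le \max(1,\rho)\big(\|w\|_{\infty,[0,t]}+\|v\|_{\infty,[0,t]}\big)$. Replace $\gamma$ by $\tilde\gamma(s)=\gamma(\max(1,\rho)s)$, which is still of class $\mathcal K$, since $\gamma$ is linear and increasing.

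\emph{Main obstacle.} The only delicate point is the cross term in Step 1. (A3) is calibrated so that the output-injection mismatch consumes exactly half of the dissipation margin. This leaves $\lambda\|e\|^2$ to split between the decay rate and the Young absorptions, so the Young weights must be chosen so that they together take at most $\lambda/2$, as done above.

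A secondary check is that (A1)--(A3) are used at the pair $(x(t),\hat x(t))$ pointwise in time. Existence of absolutely continuous solutions on $[0,t]$ is tacitly assumed, exactly as in Proposition~\ref{prop:lyap-to-iss}.
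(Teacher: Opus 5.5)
Your proposal is correct and follows the same route as the paper's proof sketch: $V=e^\top Pe$, (A1) for the drift, (A2)--(A3) to absorb the output-injection term into half of it, Young's inequality on the $w$ and $Kv$ terms, then Proposition~\ref{prop:lyap-to-iss}. You go further than the sketch by making the constants explicit ($a=\lambda/(2\lambda_{\max}(P))$, $b=\max(c_w,c_v/\rho^2)$) and by spelling out the passage from $\|d\|_{\infty,[0,t]}$ to $\|w\|_{\infty,[0,t]}+\|v\|_{\infty,[0,t]}$.
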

\begin{proof}[Proof sketch]
Use $V(e)=e^\top P e$ and differentiate along \eqref{eq:error-dynamics}.
Assumption (A1) provides a negative drift term in $\|e\|^2$. The output injection term is
bounded by (A2) and controlled by (A3). The disturbance terms are handled by Young's
inequality, yielding an estimate of the form $\dot V \le -aV + b\|d\|^2$ for suitable
constants $a,b>0$ and a weighted norm on $(w,v)$. Proposition~\ref{prop:lyap-to-iss}
then implies the ISS bound \eqref{eq:iss-bound}.
\end{proof}
Here the separated form $(w,v)$ is used only for interpretability; by norm
equivalence it can be absorbed into the aggregated disturbance input $d$ used in
Assumption 1. Corollary~\ref{cor:incremental-full} replaces abstract ISS Lyapunov existence assumptions
by explicit incremental dissipativity and Lipschitz-type conditions. These are checkable
and depend directly on the right-hand side of the nonlinear dynamics.

\subsection{Positive and cooperative systems}

Positive and cooperative systems form an important subclass where constant diagonal
Lyapunov functions often arise naturally, making them well suited for illustrating
the probabilistic ISS-tube interpretation.

\begin{corollary}[Positive/cooperative subclass with diagonal metric]
\label{cor:positive-full}
Assume that the system \eqref{eq:plant} is positive on $\mathbb{R}^n_{\ge 0}$ and cooperative
on a forward-invariant region of interest, and that the conditions (A2)--(A3) of
Corollary~\ref{cor:incremental-full} hold. If, in addition, there exists a \emph{diagonal}
matrix $P\succ 0$ and $\lambda>0$ such that the incremental dissipativity condition (A1)
holds for all $x,\hat x\ge 0$, then the estimation error system is ISS with respect to
$(w,v)$, and the ISS bound \eqref{eq:iss-bound} holds.
\end{corollary}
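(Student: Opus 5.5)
The plan is to reduce Corollary~\ref{cor:positive-full} to Corollary~\ref{cor:incremental-full}, and through it to Proposition~\ref{prop:lyap-to-iss}, by checking that all trajectories stay in the region where the diagonal-metric condition (A1) is assumed. The only real difference from the general case is that (A1) now holds only for $x,\hat x\ge 0$, not on all of $\mathbb{R}^n$. The first step is therefore an invariance argument. Positivity of \eqref{eq:plant} on $\mathbb{R}^n_{\ge 0}$, together with cooperativity on the forward-invariant region of interest, is used to guarantee that the plant state $x(t)$ and the observer state $\hat x(t)$ remain in $\mathbb{R}^n_{\ge 0}$ for all $t\ge 0$ along the trajectories considered. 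For $\hat x$ this is where I would add an explicit standing hypothesis, which I expect the intended reading already contains. One option is that the observer is run on the same invariant region, for instance with the nonnegativity of the region assumed for $\hat x$ directly. Another is that cooperativity, via the Kamke--M\"uller condition, makes the region invariant under the observer flow when $K$ and the admissible disturbances preserve it.

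Once both trajectories lie in the nonnegative orthant, I follow the proof sketch of Corollary~\ref{cor:incremental-full} verbatim with $V(e)=e^\top P e$ and $P$ diagonal. Along \eqref{eq:error-dynamics}, for almost all $t$,
\[
\dot V = 2e^\top P\big(f(x,u,t)-f(\hat x,u,t)\big) - 2e^\top PK\big(h(x,t)-h(\hat x,t)\big) + 2e^\top P w - 2e^\top PKv .
\]
(A1), now applicable since $x,\hat x\ge 0$, bounds the first term by $-2\lambda\|e\|^2$. By (A2)--(A3) the second term is at most $2\|PK\|L_h\|e\|^2\le \lambda\|e\|^2$. The last two terms are handled by Young's inequality: $2e^\top Pw\le \tfrac{\lambda}{2}\|e\|^2+\tfrac{2}{\lambda}\|P\|^2\|w\|^2$, and similarly $2e^\top PKv\le \tfrac{\lambda}{4}\|e\|^2 + \tfrac{4}{\lambda}\|PK\|^2\|v\|^2$, with constants chosen so the total $\|e\|^2$ coefficient stays negative. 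This gives
\[
\dot V\le -\tfrac{\lambda}{4}\|e\|^2 + c_1\|w\|^2 + c_2\|v\|^2 ,
\]
and using $\|e\|^2\ge V/\lambda_{\max}(P)$ this becomes $\dot V\le -aV + b\|d\|^2$. Here $a=\lambda/(4\lambda_{\max}(P))$, $b=\max(c_1,c_2/\rho^2)$, and the weighted norm is \eqref{eq:d-weighted-norm}.

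Proposition~\ref{prop:lyap-to-iss} then yields \eqref{eq:iss-bound} with the explicit $\beta,\gamma$ of \eqref{eq:beta-gamma}. Norm equivalence between $\|d\|_{\infty,[0,t]}$ and $\|w\|_{\infty,[0,t]}+\|v\|_{\infty,[0,t]}$ gives the separated form of the statement. Diagonality of $P$ is not used analytically; it only makes (A1) easier to verify for positive systems. I expect the main obstacle to be the first step, ensuring that $\hat x$ (and $x$ under disturbances $w$) does not leave the orthant, because positivity of the plant alone does not control the observer. If that cannot be derived, I would state it as part of the region-of-interest hypothesis rather than try to prove it from cooperativity.
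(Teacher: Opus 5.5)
Your argument is the paper's argument. The paper's proof just says Corollary~\ref{cor:incremental-full} applies verbatim with $V(e)=e^\top Pe$ and that diagonality of $P$ plays no analytic role. Your explicit Young's-inequality bookkeeping (net drift $-\tfrac{\lambda}{4}\|e\|^2$, $a=\lambda/(4\lambda_{\max}(P))$, $b=\max(c_1,c_2/\rho^2)$) correctly fills in what both corollaries only sketch.

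The invariance caveat you raise is a real gap, and the paper's proof does not handle it. The paper asserts that positivity/cooperativity makes the region forward invariant, but that concerns the plant's unforced flow, not the observer. The requirement $x(s),\hat x(s)\ge 0$ on $[0,t]$ cannot be derived from the stated hypotheses, as a one-dimensional example shows.
\begin{itemize}
\item Take $n=1$, $f(x)=-|x|$, $h\equiv 0$ (so $L_h=0$ and (A3) holds for any $K$), $P=1$, $\lambda=1$. The plant is positive, cooperativity is vacuous in one dimension, and (A1) holds for $x,\hat x\ge 0$.
\item Set $K=1$, $x(0)=\hat x(0)=0$, $w\equiv 0$, $v=-\varepsilon$ on $[0,1]$ and $v=0$ afterwards. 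Then $x\equiv 0$, while $\hat x$ becomes negative at once and obeys $\dot{\hat x}=\hat x+v$.
\item Hence $\hat x(1)=-\varepsilon(\exp(1)-1)$ and $\hat x(t)=\hat x(1)\exp(t-1)\to-\infty$. So $\|e(t)\|\to\infty$ although $e(0)=0$ and $\|v\|_{\infty}=\varepsilon$, and no ISS bound can hold.
\end{itemize}

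So stating orthant invariance of both trajectories as an explicit standing hypothesis is exactly what is needed. Under it, the bound at time $t$ holds for every trajectory with $x(s),\hat x(s)\ge 0$ on $[0,t]$.

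Your alternative via Kamke--M\"uller would not supply this. Cooperativity is a condition on off-diagonal monotonicity. Orthant invariance of the observer instead requires the boundary condition $f_i(\hat x,u,t)+[K(y-h(\hat x,t))]_i\ge 0$ whenever $\hat x\ge 0$ and $\hat x_i=0$, for all admissible $y$. That condition fails in the example above, and generically whenever $v$ is sign-indefinite. This is consistent with the negative estimates reported in the paper's own simulations.
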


\begin{proof}
Under the stated assumptions, Corollary~\ref{cor:incremental-full} applies verbatim with
the same Lyapunov function $V(e)=e^\top P e$. The diagonal restriction on $P$ does not
change the proof; it only specializes the metric to a form that is compatible with the
geometry commonly used in positive/cooperative systems (e.g., diagonal stability and
weighted norm constructions). Therefore, the ISS estimate \eqref{eq:iss-bound} holds. Positivity/cooperativity ensures that the considered
region (e.g., $\mathbb{R}^n_{\ge 0}$) is forward invariant and motivates the use of such
diagonal metrics in applications, but the ISS argument itself follows directly from
Corollary~\ref{cor:incremental-full}.
\end{proof}


\section{Simulation study}

This section illustrates the proposed interpretation on a continuous-time positive
compartment model with aggregated measurements. We compare Kalman--Bucy and Gaussian
and robust MHE uncertainty bands with deterministic and high-probability ISS tubes.
All estimators are tested on identical Monte Carlo realizations of $(w,v)$ and the
same sample-and-hold measurement stream. Regime R1 matches the bounded-signal assumptions
behind ISS analysis, while R2 introduces measurement outliers that violate Gaussian
noise assumptions. This setup allows direct comparison of empirical coverage, constraint
violations, and the conservativeness of ISS-based containment.




\subsection{Model and measurements}

We consider a positive compartment system
\begin{equation}
\dot{x}(t)=Ax(t)+w(t), \qquad x(t)\in\mathbb{R}^3_{\ge 0},\label{eq:sim-model}
\end{equation}
with the Metzler, Hurwitz matrix
\begin{equation}
A=
\begin{bmatrix}
-0.65 & 0.15 & 0 \\
0.35 & -0.60 & 0.10 \\
0 & 0.25 & -0.35
\end{bmatrix}.\label{eq:sim-A}
\end{equation}

The example should be understood as a stylized model of a positive dynamical
system rather than as an identified model of one specific plant. Compartment
structures of this type are common when the state variables represent
nonnegative masses, concentrations, inventories, or populations, and when only
aggregate sensing is available. This makes the example representative of a
broad class of practically relevant monitoring problems in which positivity is
structural and full state measurement is not available.

Measurements are aggregated as total mass and one compartment,
\begin{equation}
y(t)=Cx(t)+v(t), \qquad
C=
\begin{bmatrix}
1 & 1 & 1 \\
1 & 0 & 0
\end{bmatrix}.\label{eq:sim-C}
\end{equation}

This sensing structure intentionally creates partial observability: one channel
measures a total balance, while the second measures only a selected component.
Such a setup is useful here because it reflects realistic sensing limitations and
creates a meaningful setting for comparing conservative ISS-based uncertainty
tubes with distribution-dependent Bayesian uncertainty bands.

\subsection{Disturbance and noise regimes}





We consider two regimes.

\paragraph{R1: bounded disturbances (matched).}
The signals $w(t)$ and $v(t)$ are bounded and piecewise constant:
\begin{equation}
|w_i(t)|\le \bar w,\quad |v_j(t)|\le \bar v,\qquad
\bar w=0.02,\ \bar v=0.01.
\label{eq:R1-bounds}
\end{equation}

\paragraph{R2: outlier-contaminated measurements (mismatch).}
The process disturbance remains as in R1. Measurement noise at sampling instants
$t_k=k\Delta t_y$ follows a mixture model: with probability $1-p$, $v_k\sim\mathcal{N}(0,\sigma^2 I)$,
and with probability $p$, $v_k\sim\mathcal{N}(0,(\kappa\sigma)^2 I)$. We use
\begin{equation}
\sigma=0.005,\quad p=0.03,\quad \kappa=15.
\label{eq:R2-mix}
\end{equation}

Measurements are sampled and held constant. Denoting $K=\lfloor t/\Delta t_y\rfloor$,
the essential supremum reduces to a maximum over samples:
\begin{equation}
\|v\|_{\infty,[0,t]} = \max_{0\le k\le K}\|v(t_k)\|.
\label{eq:samplehold-sup}
\end{equation}







\subsection{Estimators}

We compare three estimators.

\paragraph{E1: Kalman--Bucy filter (Gaussian).}
A continuous-time Kalman--Bucy filter is implemented with covariances
\begin{equation}
Q=qI_3,\quad q=10^{-4},\qquad R=rI_2,\quad r=2.5\cdot 10^{-5}.
\label{eq:QR}
\end{equation}
Credibility intervals are extracted from the covariance $P(t)$ as marginal intervals.

\paragraph{E2: positivity-aware MAP/MHE (Gaussian).}
A moving-horizon MAP estimator is defined from Gaussian process and measurement penalties
and nonnegativity constraints on states. Credible intervals are obtained via a Laplace
approximation using the local Hessian at the MAP solution.

\paragraph{E3: robust Bayesian MHE.}
A robust MHE variant replaces the Gaussian measurement likelihood by a heavy-tailed
alternative, aiming to mitigate outliers in regime R2.

\subsection{ISS tube via quadratic Lyapunov function}

To compute a deterministic reference tube, we use a constant-gain observer
\begin{equation}
\dot{\hat x}(t) = A\hat x(t) + L\big(y(t)-C\hat x(t)\big),
\label{eq:ISS-observer}
\end{equation}
with
\begin{equation}
L=
\begin{bmatrix}
1.20 & 0.80\\
0.60 & 0.10\\
0.90 & 0.00
\end{bmatrix},
\qquad F:=A-LC.
\label{eq:ISS-gain}
\end{equation}
The error $e(t)=x(t)-\hat x(t)$ satisfies
\begin{equation}
\dot e(t) = F e(t) + d(t),\qquad d(t)=w(t)-Lv(t).
\label{eq:error-linear}
\end{equation}

Let $P\succ 0$ solve the Lyapunov equation
\begin{equation}
F^\top P + PF = -I.
\label{eq:lyap-eq}
\end{equation}
Then $V(e)=e^\top P e$ satisfies the dissipation inequality
\begin{equation}
\dot V(e(t)) \le -aV(e(t)) + b\|d(t)\|^2,
\label{eq:Vdot-ab}
\end{equation}
for explicit constants $a>0$, $b>0$ (cf. Proposition~\ref{prop:lyap-to-iss}).
Consequently,
\begin{equation}
V(t) \le e^{-at}V(0) + \frac{b}{a}\big(1-e^{-at}\big)\|d\|_{\infty,[0,t]}^2,
\label{eq:V-bound}
\end{equation}
and using $\lambda_{\min}(P)\|e\|^2\le V(e)$ we obtain the ISS tube radius
\begin{equation}
r_{\mathrm{ISS}}(t)=
\sqrt{\frac{1}{\lambda_{\min}(P)}\left(
e^{-at}V(0)+\frac{b}{a}(1-e^{-at})\|d\|_{\infty,[0,t]}^2
\right)}.
\label{eq:rISS}
\end{equation}
In regime R1 we use the deterministic bound $\|d(t)\|\le \bar d$ derived from \eqref{eq:R1-bounds}.
In regime R2, $\|d\|_{\infty,[0,t]}$ is bounded with high probability using the lemma below.

\subsection{High-probability envelope in regime R2}

\begin{lemma}[Conservative envelope for sample-and-hold mixture noise]
\label{lem:R2-envelope}
Fix $t\ge 0$ and let $K=\lfloor t/\Delta t_y\rfloor$. For any $\delta\in(0,1)$ define
\begin{equation}
b(t,\delta) := \kappa\sigma\sqrt{2\log\!\left(\frac{8(K+1)}{\delta}\right)}.
\label{eq:btd}
\end{equation}
Then
\begin{equation}
\mathbb{P}\!\left(\max_{0\le k\le K}\|v(t_k)\|_\infty \le b(t,\delta)\right)\ge 1-\delta.
\label{eq:R2-envelope-v}
\end{equation}
where $\Vert \cdot\Vert_\infty$ denotes componentwise maximum norm on $\mathbb{R}^n$
Consequently, with $d(t)=w(t)-Lv(t)$,
\begin{equation}
\mathbb{P}\big(\|d\|_{\infty,[0,t]} \le \bar d(t,\delta)\big)\ge 1-\delta,
\qquad
\bar d(t,\delta):=\sqrt{3}\bar w + \|L\|\sqrt{2}\,b(t,\delta).
\label{eq:R2-envelope-d}
\end{equation}
\end{lemma}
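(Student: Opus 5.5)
The plan is to combine a union bound over the $2(K+1)$ scalar noise samples with a Gaussian tail bound that is uniform over the two mixture components. After that, a deterministic triangle-inequality step turns the bound on $v$ into one on $d$.

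\emph{Step 1 (per-component tail bound).} Fix a sample index $k$ and a coordinate $j\in\{1,2\}$. Conditionally on the mixture label, $v_{k,j}$ is $\mathcal{N}(0,s^2)$ with $s\in\{\sigma,\kappa\sigma\}$, and $s\le\kappa\sigma$ because $\kappa\ge 1$. The standard Chernoff bound $\mathbb{P}(|Z|>u)\le 2e^{-u^2/(2s^2)}$ is monotone in $s$. It therefore holds for both components with $s=\kappa\sigma$, and averaging over the label preserves it. This gives
\[
\mathbb{P}(|v_{k,j}|>u)\le 2\exp\!\left(-\frac{u^2}{2\kappa^2\sigma^2}\right).
\]
Independence across samples is not needed.

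\emph{Step 2 (union bound).} The event $\max_k\|v(t_k)\|_\infty>u$ is the union of the $2(K+1)$ events $|v_{k,j}|>u$. Its probability is therefore at most $4(K+1)\exp(-u^2/(2\kappa^2\sigma^2))$. Now substitute $u=b(t,\delta)$ from \eqref{eq:btd}. Then $\exp(-u^2/(2\kappa^2\sigma^2))=\delta/(8(K+1))$, and the bound becomes $\delta/2\le\delta$. This proves \eqref{eq:R2-envelope-v}; the constant $8$ in \eqref{eq:btd} is more than is needed. The main point to check is that the norm in \eqref{eq:samplehold-sup} is the Euclidean one while the lemma uses $\|\cdot\|_\infty$. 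We handle this in Step 3 through $\|v\|_2\le\sqrt{2}\|v\|_\infty$ on $\mathbb{R}^2$, which accounts for the factor $\sqrt2$ in $\bar d$.

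\emph{Step 3 (transfer to $d$).} On the event from Step 2, for almost every $\tau\in[0,t]$ the held value satisfies $v(\tau)=v(t_k)$ for some $k\le K$. We then bound
\[
\|d(\tau)\|\le\|w(\tau)\|+\|L\|\,\|v(\tau)\|\le\sqrt3\,\bar w+\|L\|\sqrt2\,b(t,\delta).
\]
Here $\|w(\tau)\|_2\le\sqrt3\,\bar w$ holds deterministically because $|w_i|\le\bar w$ by \eqref{eq:R1-bounds}. Taking the essential supremum over $[0,t]$ gives $\|d\|_{\infty,[0,t]}\le\bar d(t,\delta)$ on an event of probability at least $1-\delta$, which is \eqref{eq:R2-envelope-d}.

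\emph{Caveat on the norm of $d$.} The relevant norm for $d=w-Lv$ is the plain Euclidean norm of the vector in $\mathbb{R}^3$, as used in \eqref{eq:Vdot-ab}. It is not the weighted norm \eqref{eq:d-weighted-norm}, and the bound should be stated with that understanding. This combined envelope is exactly Assumption~\ref{ass:envelope}, so Theorem~\ref{thm:hp-tube} applies with the tube radius \eqref{eq:rISS}.
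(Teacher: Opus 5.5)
Your proof is correct and follows essentially the same route as the paper: a Gaussian tail bound controlled by the larger mixture standard deviation $\kappa\sigma$, a union bound over the $2(K+1)$ scalar samples, and the transfer to $d=w-Lv$ via $\|v\|_2\le\sqrt{2}\|v\|_\infty$, $\|w\|_2\le\sqrt{3}\,\bar w$ and the triangle inequality. Your added remarks are accurate and make explicit what the paper leaves implicit: the choice of $b(t,\delta)$ actually gives failure probability $\delta/2$, and $\|d\|$ here must be read as the Euclidean norm on $\mathbb{R}^3$ rather than the weighted norm \eqref{eq:d-weighted-norm}.
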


\begin{proof}
For $Z\sim\mathcal{N}(0,s^2)$, $\mathbb{P}(|Z|>b)\le 2\exp(-b^2/(2s^2))$.
For the mixture model \eqref{eq:R2-mix}, the tail is dominated by the larger variance,
hence $\mathbb{P}(|v_j(t_k)|>b)\le 2\exp(-b^2/(2(\kappa\sigma)^2))$.
A union bound over $m=2$ channels and $K+1$ sampling instants yields
\[
\mathbb{P}\big(\exists\,k,j:\ |v_j(t_k)|>b\big)
\le 2m(K+1)\exp\!\left(-\frac{b^2}{2(\kappa\sigma)^2}\right).
\]
Choosing $b=b(t,\delta)$ from \eqref{eq:btd} makes the right-hand side at most $\delta$,
proving \eqref{eq:R2-envelope-v}. The bound \eqref{eq:R2-envelope-d} follows from
$\|v\|_2\le \sqrt{2}\|v\|_\infty$, $\|w\|_2\le \sqrt{3}\bar w$, and
$\|w-Lv\|\le \|w\|+\|L\|\,\|v\|$.
\end{proof}

\begin{figure}[!hb]
\centering

\begin{subfigure}[t]{0.49\textwidth}
  \centering
  \includegraphics[width=\linewidth]{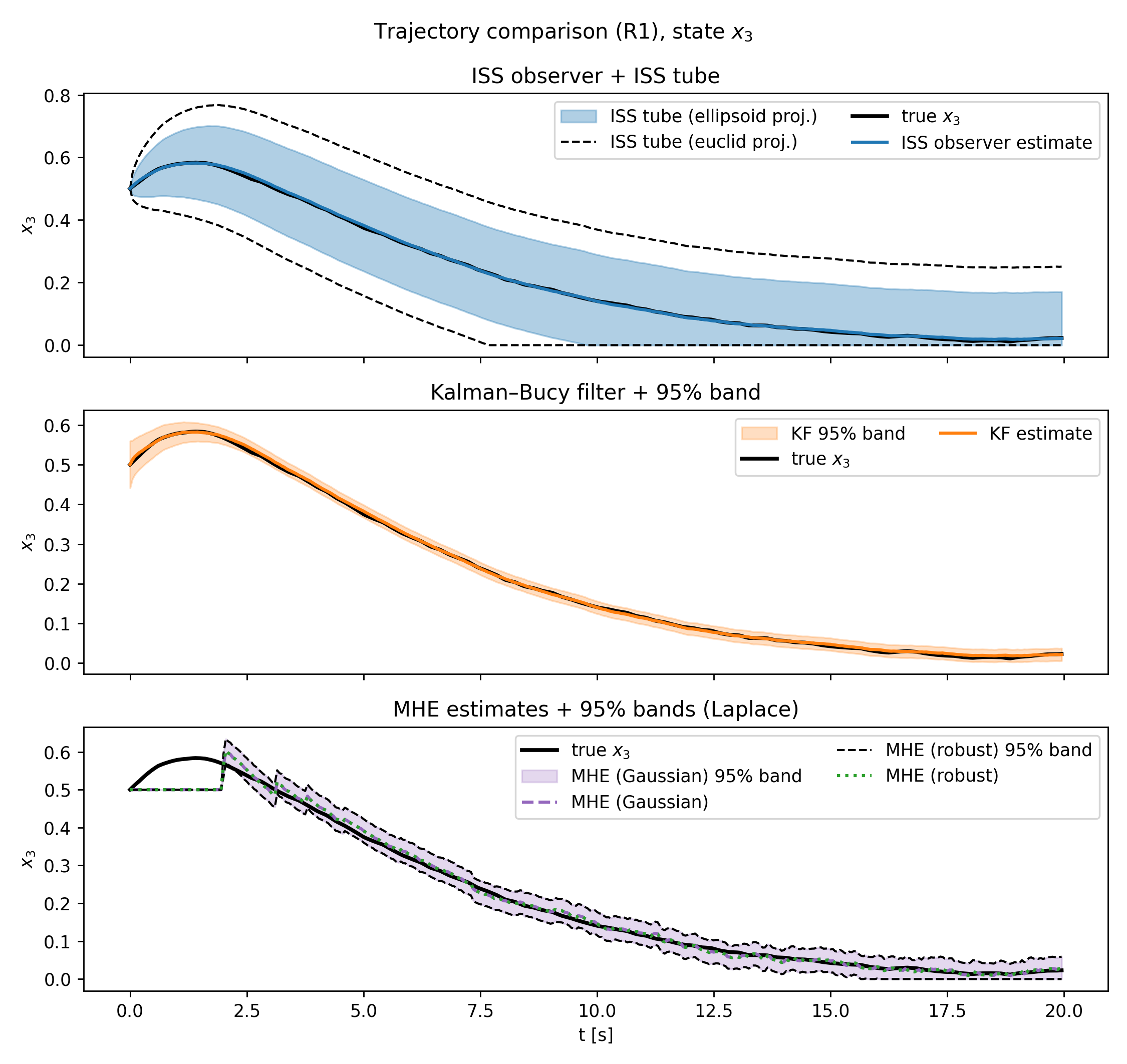}
  \caption{R1 (bounded noise)}
  \label{fig:trace_x3_R1}
\end{subfigure}
\hfill
\begin{subfigure}[t]{0.49\textwidth}
  \centering
  \includegraphics[width=\linewidth]{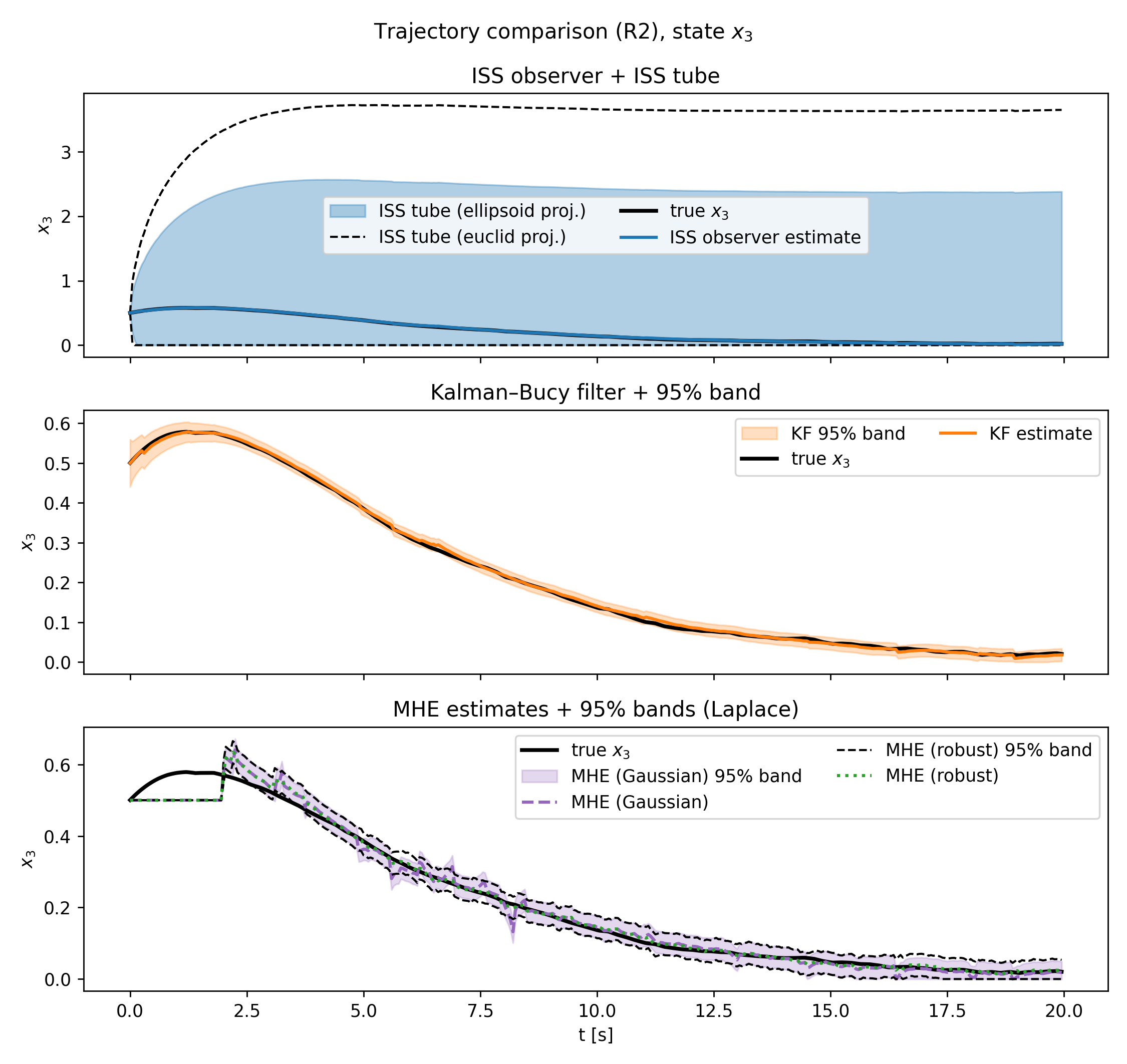}
  \caption{R2 (outlier mixture)}
  \label{fig:trace_x3_R2}
\end{subfigure}

\caption{Representative trajectories and uncertainty descriptions for the most informative state component $x_3$ under regimes R1 and R2. The plots compare the ISS tube, the Kalman--Bucy filter, and Gaussian and robust MHE bands. In R2, outlier contamination makes the Gaussian-based bands less reliable, while the ISS tube acts as a conservative containment reference.}
\label{fig:trace_x3_both}
\end{figure}

\section{Simulation and results}

Table~\ref{tab:sim-params} summarizes the simulation parameters. We report: (i) empirical coverage of nominal credibility intervals and ISS tubes,
(ii) average tube width, (iii) frequency of negative state estimates (constraint
violations), and (iv) qualitative comparison of tube shapes and conservativeness.

\begin{table}
\caption{Simulation parameters.}
\label{tab:sim-params}
\centering
\begin{tabular}{ll}
\hline
Time horizon & $T=20$ s \\
Measurement sampling & $\Delta t_y=0.05$ s (sample-and-hold) \\
Disturbance refresh & $\Delta t_w=0.2$ s (piecewise constant) \\
Monte Carlo runs & $N$ (reported in experiments) \\
R1 bounds & $\bar w=0.02$, $\bar v=0.01$ \\
R2 mixture & $\sigma=0.005$, $p=0.03$, $\kappa=15$ \\
Kalman--Bucy covariances & $Q=10^{-4}I_3$, $R=2.5\cdot 10^{-5}I_2$ \\
ISS observer gain & $L$ given by \eqref{eq:ISS-gain} \\
ISS Lyapunov equation & $(A-LC)^\top P+P(A-LC)=-I$ \\
\hline
\end{tabular}
\end{table}


From a computational viewpoint, the ISS tube construction is intentionally
lightweight. Offline, it requires choosing an observer gain and solving the
Lyapunov equation \eqref{eq:lyap-eq}. Online, it only propagates the observer state and
updates the tube radius through the closed-form bound \eqref{eq:rISS}, together with
the disturbance envelope used in the considered regime. By contrast, the
Kalman--Bucy filter additionally propagates a covariance matrix, while the
Gaussian and robust MHE variants require repeated horizon-based optimization and,
for uncertainty quantification, a local curvature or Hessian-based approximation.
Accordingly, the ISS tube should be interpreted as a computationally inexpensive
conservative robustness baseline rather than as a replacement for posterior
inference. Its advantage is not tighter uncertainty quantification, but a simple
and transparent containment guarantee under disturbance-envelope assumptions.


Figure~\ref{fig:trace_x3_both} shows representative trajectories and uncertainty descriptions for \(x_3\) under both regimes. In regime R1, all estimators track the state well and the nominal 95\%
bands are broadly consistent with the data; differences are mainly visible in band
tightness and in how positivity is handled by constrained MHE. In regime R2, occasional
outliers lead to visible excursions in E1 and in Gaussian MHE bands, which is reflected
in reduced empirical coverage and increased variability, while robust MHE mitigates the
impact of contaminated measurements.

Figure~\ref{fig:coverage_R1_R2} reports empirical coverage of nominal 95\% marginal
bands. Values below $0.95$ indicate under-coverage (miscalibration), while values above
$0.95$ indicate conservativeness. As expected, under regime R2 the Gaussian-based bands
exhibit more pronounced under-coverage due to outliers, whereas robust MHE is closer to
the nominal level. The ISS curve is interpreted as containment frequency in a robustness
tube rather than a calibrated 95\% Bayesian statement.

Figure~\ref{fig:negativity_R1_R2} shows positivity violations. Negativity events are
rare in R1 but become more frequent under R2 for unconstrained estimators due to larger
measurement perturbations. Constrained MHE reduces negativity by construction, providing
a direct diagnostic of whether the positivity prior is respected in the estimator output.

\begin{figure}[!hb]
\centering

\begin{subfigure}[t]{0.49\textwidth}
  \centering
  \includegraphics[width=\linewidth]{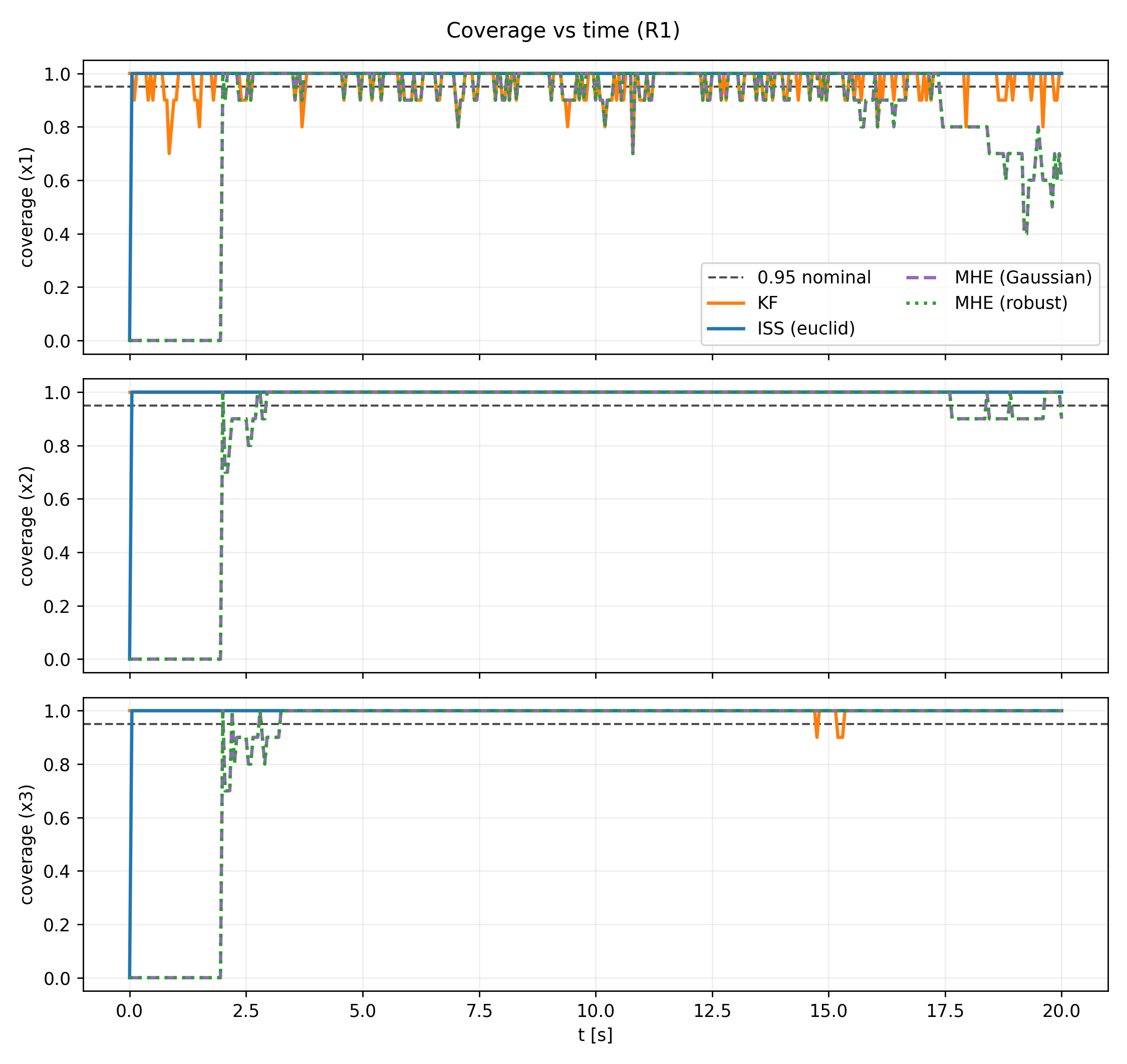}
  \caption{Regime R1 (bounded noise)}
  \label{fig:cov_R1}
\end{subfigure}
\hfill
\begin{subfigure}[t]{0.49\textwidth}
  \centering
  \includegraphics[width=\linewidth]{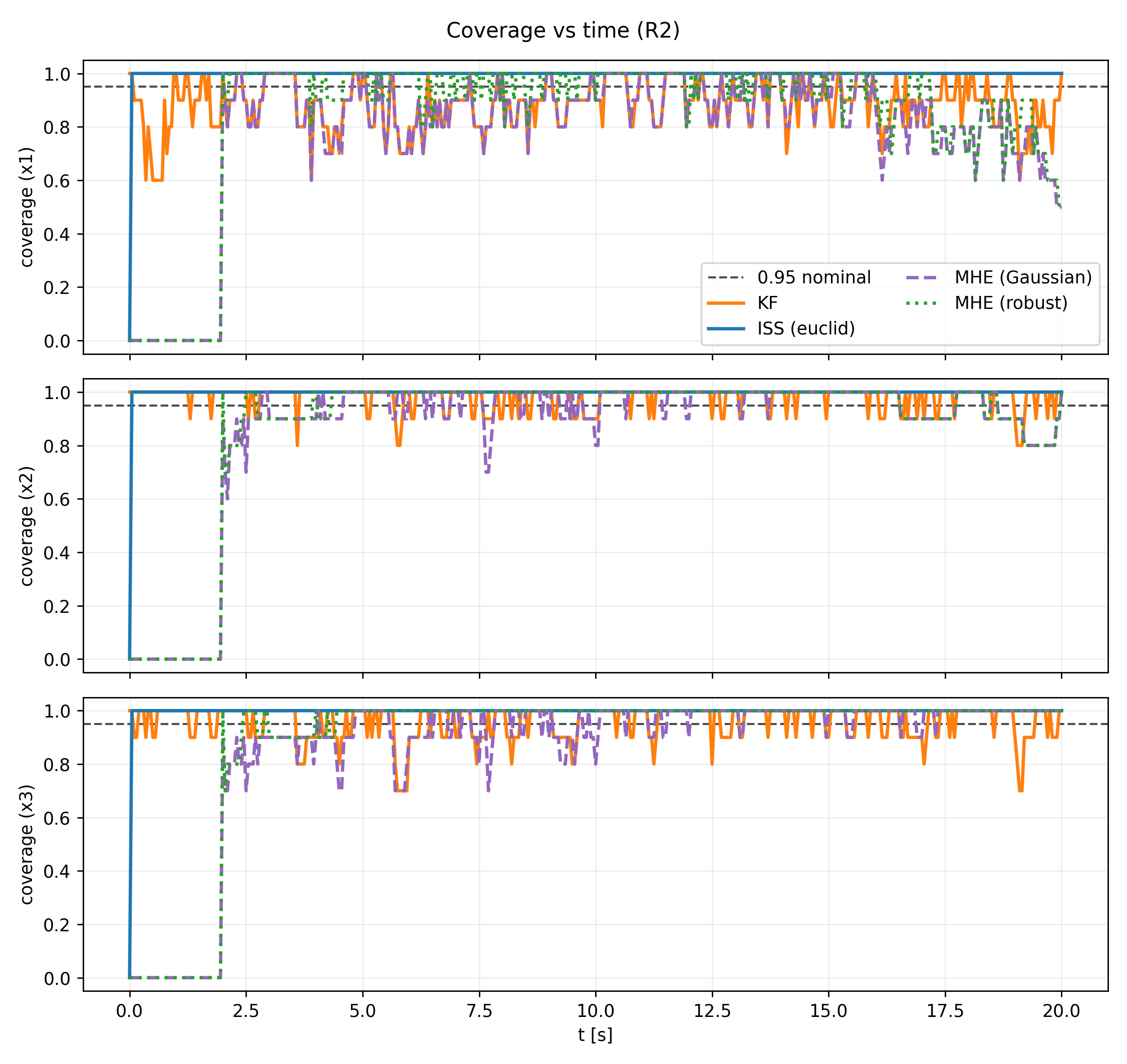}
  \caption{Regime R2 (outlier mixture)}
  \label{fig:cov_R2}
\end{subfigure}

\caption{Empirical coverage of nominal 95\% state bands over time under regimes R1 and R2. For each state component $(x_1,x_2,x_3)$, we report the Monte Carlo frequency that the true state lies inside the estimator's marginal 95\% interval. Values below 0.95 indicate under-coverage, while the ISS curve should be read as robustness-tube containment.}
\label{fig:coverage_R1_R2}
\end{figure}
\section{Conclusions and outlook}

This paper proposed a probabilistic interpretation of ISS-based estimation-error
bounds for continuous-time systems. The main contribution is not a new estimator,
but a bridge between two uncertainty languages: deterministic robustness bounds
from ISS analysis and high-probability credibility regions induced by a
probabilistic disturbance envelope. In this sense, a deterministic ISS tube can
be read as a conservative uncertainty set even when a fully specified stochastic
state-space model is unavailable.

The paper also provided explicit sufficient conditions based on quadratic
Lyapunov inequalities and showed how these conditions specialize to positive and
cooperative systems. In the simulation study, the resulting ISS tubes behaved as
expected: under matched bounded disturbances they provided conservative
containment, while under outlier-contaminated measurements they remained
interpretable as robustness envelopes, in contrast to nominal Gaussian bands,
which exhibited visible under-coverage. The robust MHE variant was less
sensitive to such contamination than Gaussian-based estimators, but at a higher
computational cost.

These results suggest that ISS tubes can serve as a useful baseline for
un-certainty-aware state estimation when robustness assumptions are more credible
than detailed distributional modelling. Future work will address discrete-time
formulations, state-dependent Lyapunov metrics, and extensions to
infinite-dimensional systems.
\begin{figure}[!ht]
\centering

\begin{subfigure}[t]{0.49\textwidth}
  \centering
  \includegraphics[width=\linewidth]{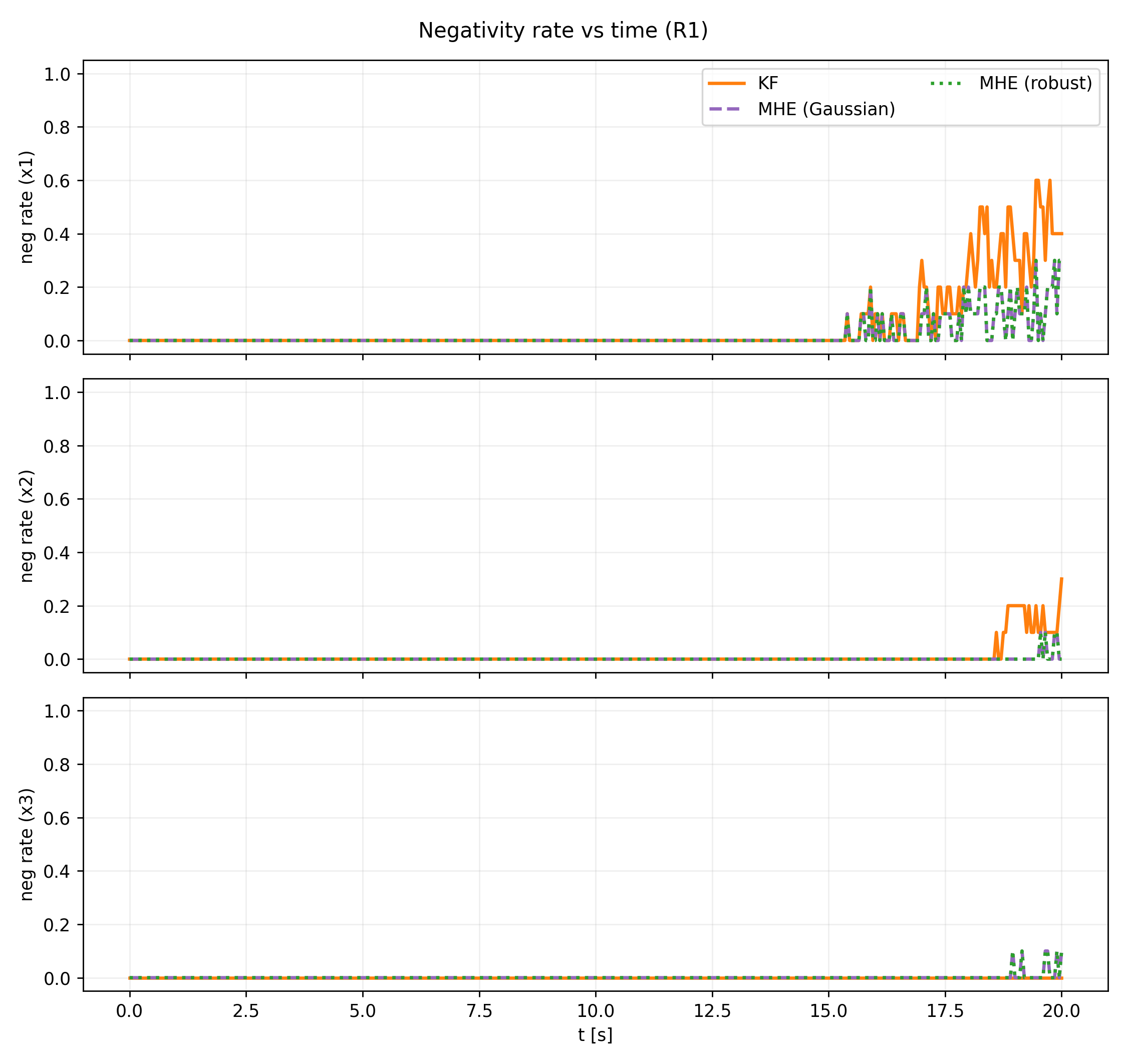}
  \caption{Regime R1 (bounded noise)}
  \label{fig:neg_R1}
\end{subfigure}
\hfill
\begin{subfigure}[t]{0.49\textwidth}
  \centering
  \includegraphics[width=\linewidth]{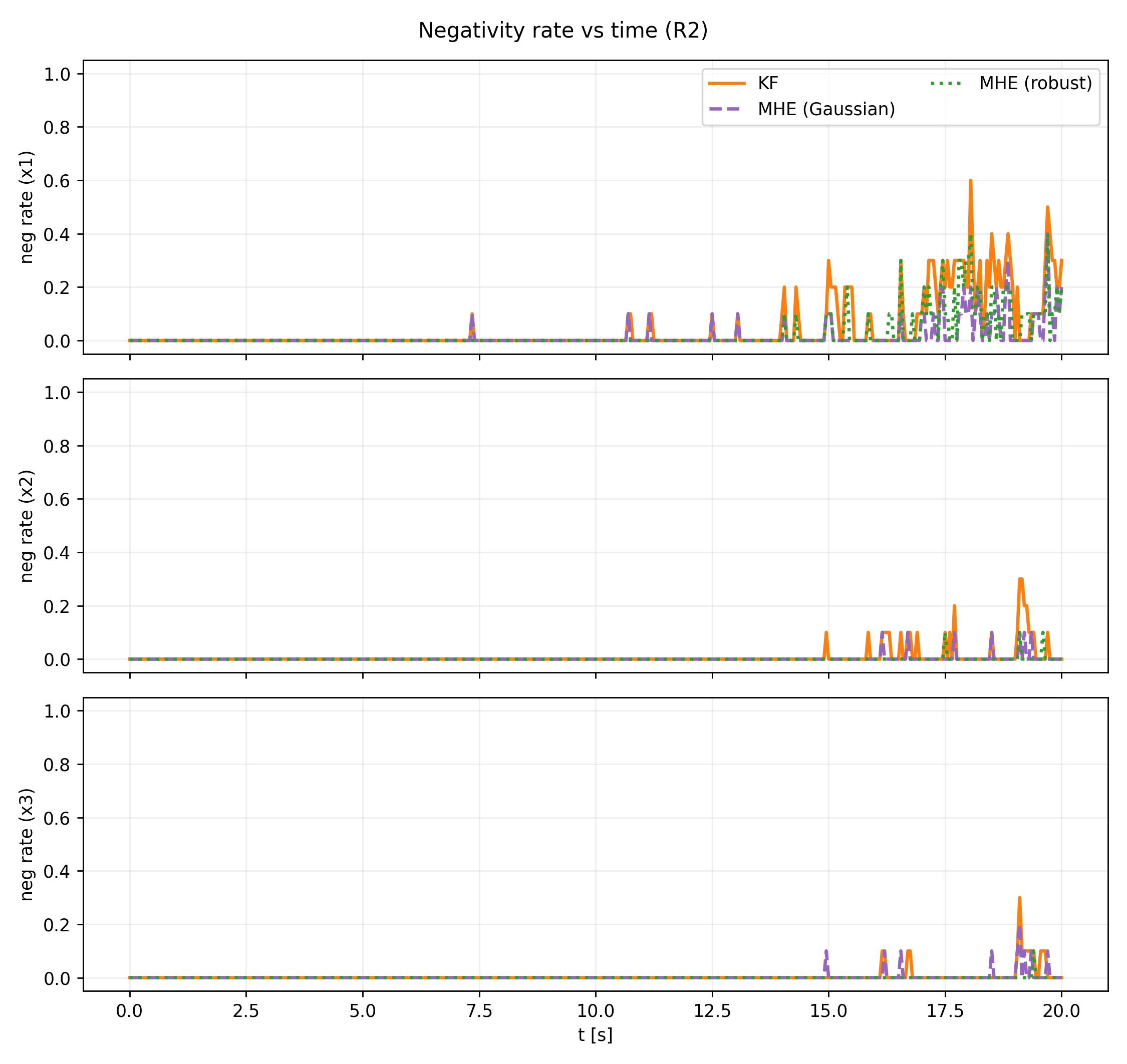}
  \caption{Regime R2 (outlier mixture)}
  \label{fig:neg_R2}
\end{subfigure}

\caption{Negativity rate of state estimates for the positive-system example under regimes R1 and R2. For each component $(x_1,x_2,x_3)$, we plot the Monte Carlo frequency of negative estimated values. This quantifies violation of the positivity prior and highlights the advantage of constrained MHE.}
\label{fig:negativity_R1_R2}
\end{figure}
\subsection*{Acknowledgment}
Work financed by AGH's subvention for scientific research.
This paper is dedicated to Professor Wojciech Mitkowski on the occasion of his 80th birthday, in recognition of his lasting contributions to control theory and to the AGH control community.
\bibliographystyle{bibtex/spmpsci}
\bibliography{references}

\end{document}